\crefname{assumption}{assumption}{assumptions}
\newtheorem*{assumption}{Assumption}
\newtheorem{theorem}{Theorem}
\let\originalleft\left
  \let\originalright\right
\renewcommand{\left}{\mathopen{}\mathclose\bgroup\originalleft}
  \renewcommand{\right}{\aftergroup\egroup\originalright}
\newcommand{\id}{\mathbbm{1}}                
\newcommand{\expo}[1]{\operatorname{e}^{#1}} 
\newcommand{\tr}[1]{\operatorname{Tr}\left[ {#1} \right]} 
\newcommand{\trin}[1]{\operatorname{Tr}[ {#1} ]} 
\newcommand{\Tr}{\operatorname{Tr}} 
\newcommand{\braket}[1]{\vphantom{\left(#1\right)^A} \left \langle #1 \right \rangle } 
\newcommand{\ket}[1]{\left|#1 \right \rangle \vphantom{\left( #1 \right)^A}} 
\newcommand{\bra}[1]{\left\langle #1 \right | \vphantom{\left(#1\right)^A}} 
\newcommand{\dif}{\;\mathrm{d}}
\newcommand{\Real}{\mathbb{R}}
\newcommand{\Hil}{\mathcal{H}}
\newcommand{\supp}{\operatorname{supp}}
\newcommand{\rank}[1]{\relax\ifmmode\operatorname{rank}#1\else rank-$#1$\fi}
\newcommand{\hi}{H^{\mathrm{int}}}
\newcommand{\thi}{\tilde{H}^{\mathrm{int}}}
\newcommand{\hs}{H_s}
\newcommand{\hu}{H_u}
\newcommand{\he}{H_e}
\newcommand{\hb}{H_b}
\newcommand{\hw}{H_w}
\newcommand{\hc}{H_c}
\newcommand{\timeorder}[1]{\mathcal{T} \left[ #1 \right]}
\newcommand{\oo}[1]{\mathcal{O}\left( {#1} \right)}
\newcommand{\subw}{{\raisebox{-5pt}{$\!\!{\scriptstyle w}\!$}}}
\newcommand{\subc}{{\raisebox{-4pt}{$\!\!{\scriptstyle c}$}}}
\renewcommand{\iint}{\int \!\!\!\!\! \int}
\begin{document}

\title{Clock-Driven Quantum Thermal Engines}

\author{Artur S.L. Malabarba}
\author{Anthony J. Short}
\affiliation{H.H. Wills Physics Laboratory, University of Bristol, Tyndall Avenue, Bristol, BS8 1TL, U.K.}
\author{Philipp Kammerlander}
\affiliation{Institute for Theoretical Physics, ETH Z\"{u}rich, Wolfgang-Pauli-Strasse 27, 8093 Z\"{u}rich, Switzerland}

\date{\today}

\begin{abstract}
    We consider an isolated autonomous quantum machine, where an explicit quantum clock is responsible for performing all transformations on an arbitrary quantum system (the engine), via a time-independent Hamiltonian.
    In a general context, we show that this model can exactly implement any energy-conserving unitary on the engine, without degrading the clock.
    Furthermore, we show that when the engine includes a quantum work storage device we can approximately perform  completely general unitaries on the remainder of the engine.
    This framework can be used in quantum thermodynamics to carry out arbitrary transformations of a system, with accuracy and extracted work as close to optimal as desired, whilst obeying the first and second laws of thermodynamics.
    We thus show that autonomous thermal machines suffer no intrinsic thermodynamic cost compared to externally controlled ones.
\end{abstract}
\maketitle

Recently there has been a great deal of interest in the application of thermodynamics to individual quantum systems, which may be composed of just a few atoms or qubits~\cite{HorodeckiOppenheim13,Skrzypczyk2014,Johan13,Skrzypczyk2014,Skrzypczyk13,AndersGiovannetti13,Alicki04,Aberg13,Frenzel14,Amikam12,Noah10,EitanRonnie96,Brandao13b,Brandao13,Hasegawa2010,Armen08}.
Given that thermodynamics was invented before quantum theory was even envisaged, and typically applies to macroscopic objects, it is perhaps surprising how close an analogy can be drawn between the quantum and classical case.
In~\cite{HorodeckiOppenheim13, Johan13,Skrzypczyk2014,Skrzypczyk13}, thermal engines are constructed out of quantum mechanical parts, incorporating an explicit system, thermal bath and work storage system.
In other approaches~\cite{AndersGiovannetti13,Alicki04}, the thermal engine is a system with externally-controlled Hamiltonian and access to a thermal bath.

So far, these frameworks all involve the external application of discrete transformations to the thermal engine.
An interesting open question, raised by several authors~\cite{Skrzypczyk2014,HorodeckiOppenheim13,Aberg13,Frenzel14}, is whether this external control should carry a thermodynamic cost, and how to include this control explicitly in the framework.
In this paper, we address this issue by describing how an explicit quantum clock can control the evolution of a completely arbitrary quantum engine, thus allowing any unitary protocol to be carried out via a time-independent global Hamiltonian.

We first show that any  energy-conserving unitary operation can be exactly implemented on a quantum system (the engine) by attaching a quantum clock to it via the correct time-independent interaction Hamiltonian.
Furthermore, this process is essentially independent of the initial state of the clock, requiring only that it lies within a known finite region.
In particular, it is not necessary for the clock to precisely specify the `time'.
After the unitary has been fully implemented, the clock is not correlated with the system and could be used to perform further operations.

Next, we show that we can also approximately implement any unitary on a quantum system, including those which change the energy, by including an explicit work storage system in the engine (essentially a `weight on a string').
We can achieve arbitrarily good accuracy by using a weight with a sufficiently narrow momentum distribution.

Finally, we consider this framework in the context of quantum thermodynamics.
We show that our clock-driven engine obeys the first and second laws of thermodynamics, and that any transformation of the system can be implemented to any desired accuracy whilst extracting work as close as desired to the reduction in free-energy of the system.
Furthermore, after an optimal protocol neither the clock nor the weight are degraded relative to their initial states in their power to carry out subsequent transformations.
We thus show that clock-driven thermal engines suffer no intrinsic thermodynamic cost compared to externally controlled ones.

We end with a discussion on the viability of alternative clocks, average energy conservation, and the use of the clock also to measure time in the system.

\section{The Clock}
\label{sec:clock-framework}

We first consider a Hilbert space divided into $2$ parts, the clock and the engine, $\Hil = e \otimes c$.
The engine $e$ is an arbitrary quantum system that can start in any initial state $\rho_e$ and be subject to any Hamiltonian $\he$.
As an example, it could be a simple qubit system or it could be a sophisticated thermal machine composed of several high-dimensional subsystems.

The clock, $c$, is a way of controlling the evolution of the engine, without having to provide external input.
In our idealized framework, it has the continuous-spectrum Hamiltonian $\hc = v P_c$ where $P_c$ is the momentum operator in $c$ and, for convenience, we take $v=1 m s^{-1}$.
Under its free evolution, the clock state, $\rho_c$, thus moves to the right with constant unit velocity, and one may interpret its position $X_c$ as reflecting time.

The engine and the clock interact through a static Hamiltonian $\hi$ by means of which the control is implemented.
The total Hamiltonian thus takes the form
\begin{equation}
    \label{eq:4}
    H = \he \otimes \id_c + \id_e \otimes \hc + \hi.
\end{equation}
Note that all of these Hamiltonians are time-independent.
For conciseness, we omit the identities here on out.

In order to accurately and repeatably implement unitary operations on the engine, we only need a simple assumption on the initial state.
\begin{assumption}
    \label{as:3}
    The initial state is a product state
    \begin{equation}
        \label{eq:37}
        \rho(0) = \rho_e(0) \otimes \rho_c(0),
    \end{equation}
    and the support of $\rho_c(0)$ in position is contained inside a known finite interval.
    We define $K$ as the size of this interval.
\end{assumption}
Knowing that the clock is located inside some region is \emph{all} we require of it, which is in stark contrast to the stronger assumption of the clock being initially in a very narrow position state (corresponding to a well-defined `time') which one could have expected here.

This is a mathematically convenient assumption, which simplifies the calculations at very little cost.
Any normalizable state is always close in trace distance to a state with finite support.
Since the steps involved in the proofs below never increase the trace distance, all results will hold up to arbitrary precision for any clock state, even if it has infinite tails---say, a Gaussian distribution.

\subsection{Energy-conserving Unitaries}
\label{sec:role-clock}

We now constructively show that it is possible within our framework to exactly implement any energy-conserving unitary $U_e$ on the engine via interactions with the clock.
This succeeds \emph{for any} initial state satisfying \cref{eq:37}, and the clock and engine are always unentangled at the end.

We start by choosing an interaction Hamiltonian of the form
\begin{equation}
    \label{eq:3}
    \hi = \int_\Real \hi_{e}(x) \otimes \ket{x}_\subc \!\bra{x} \dif x,
\end{equation}
where $\ket{x}_\subc$ are the clock's position eigenstates.
Since the clock's position increases linearly with time, this means the clock is driving the engine by applying on it an effective time-dependent Hamiltonian.

We then choose $\hi_{e}(x)$ such that
\begin{equation}
    \label{eq:1}
    [\hi,\he ] = 0,
\end{equation}
thus guaranteeing that the interaction will never transfer energy between the clock and the engine.
This also prevents the clock and engine from becoming entangled.

Given this commutation relation, $\hi$ described in the interaction picture takes the form
\begin{align}
  \label{eq:20}
  \thi(t) &= \expo{\frac{i}{\hbar}(\he + \hc) t} \hi \expo{-\frac{i}{\hbar} (\he + \hc) t} \\
          &= \int_\Real \hi_{e}(x) \otimes \ket{x-t}_\subc\!\bra{x-t} \dif x. \notag
\end{align}
Thus, shifting $x\to x+t$, one can see that the total evolution operator between times $0$ and $t$ is given by
\begin{align}
    \label{free_engine}
  \mathcal{U}(t) &= \expo{-\frac{i}{\hbar}  (\he + \hc) t}
                   \timeorder{ \expo{- \frac{i}{\hbar} \int_{0}^{t} \thi(t') \dif t'}} \\
                 &= \expo{-\frac{i}{\hbar} (\he + \hc) t}
                   \!\int_\Real\! \timeorder{ \expo{ -\frac{i}{\hbar} \int_{0}^{t} \hi_{e}(x+t') \dif t'}}
                   \otimes \ket{x}_\subc\!\bra{x} \dif x \notag \\
                 &= \expo{-\frac{i}{\hbar} \he t} \int_\Real
                   \timeorder{ \expo{ -\frac{i}{\hbar} \int_{0}^{t} \hi_{e}(x+t') \dif t'}}
                   \otimes \ket{x+t}_\subc\!\bra{x} \dif x, \notag
\end{align}
where $\timeorder{\cdot}$ is the time ordering operation.

\begin{figure}
    \centering
    \includegraphics[width=\linewidth]{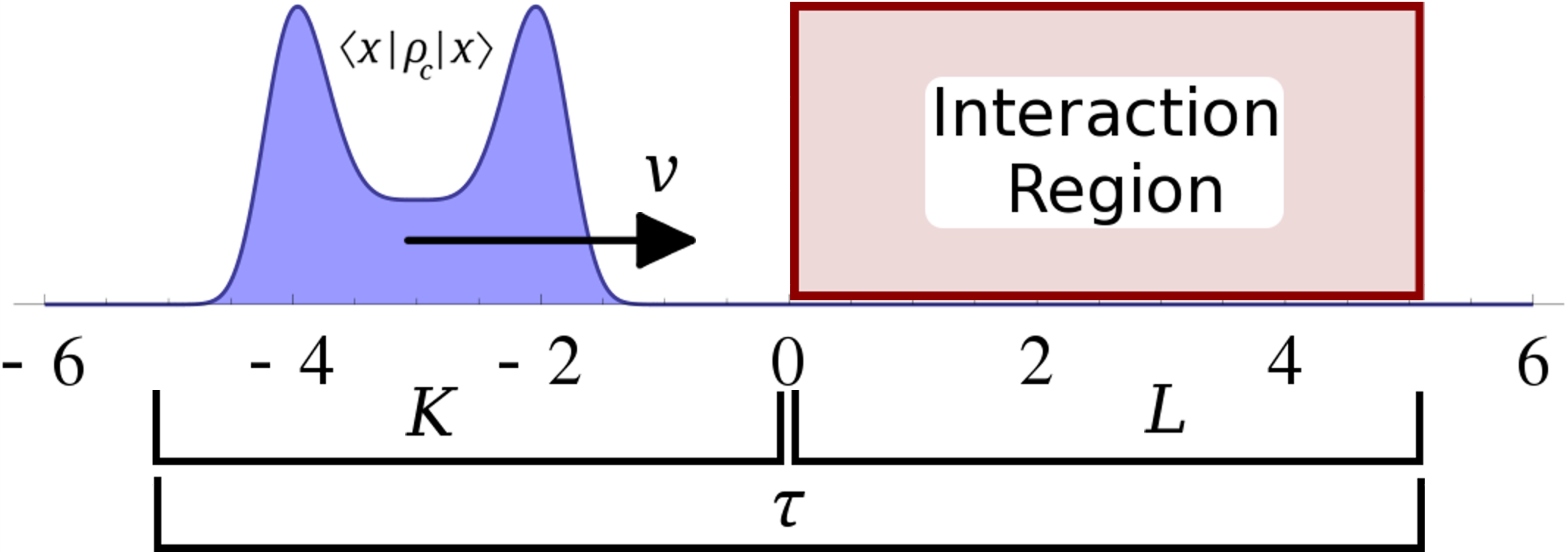}
    \caption{
      Graphical diagram of a possible initial clock state (left) and the interaction region (the support in position of $\hi_e(x)$, right).
      With $v=1$, for $t>\tau$ the clock state will be entirely to the right of the interaction region.
    }
    \label{fig:1}
\end{figure}
Note that the clock moves forward at constant speed, despite the interaction.
We now choose $\hi_e$ to have support inside an interval of size $L$ immediately to the right of the clock state's support (the interaction region), define $\tau = K + L$ and choose $t > \tau$ so the clock has time to completely `cross-over' the support of $\hi_e$ (see \cref{fig:1}).
This causes the time integral in \cref{free_engine} to go over the entire support of $\hi_e$, becoming independent of $x$.

In other words, denoting as $\supp[f(x)]$ the support of $f$ on $x$, we have $\supp[\hi_e(x)] \subset [x',x'+\tau]$ for all $x' \in \supp[\braket{x|\rho_c(0)|x}]$.
And so, for all $t>\tau$, the time evolution acts separately on the clock and on the engine
\begin{align}
  \label{eq:23}
  \rho(t) &= \mathcal{U}_e(t)\, \rho_e(0)\ \mathcal{U}_e^\dagger(t) \otimes
            \expo{-\frac{i}{\hbar} \hc t}\rho_c(0)
            \expo{\frac{i}{\hbar} \hc t} \notag\\
  \mathcal{U}_e(t) &= \expo{-\frac{i}{\hbar} \he t} U_e \\
  U_e &= \timeorder{ \expo{ -\frac{i}{\hbar}
        \int_{\supp(\hi_e)} \hi_{e}(t') \dif t'}}, \notag
\end{align}
which proves that the engine is acted on by $U_e$ and then undergoes free evolution indefinitely.

Finally, for any unitary $V$, there is always a Hermitian operator acting on the space, $G_{V}$, such that $V = \expo{-iG_V}$.
Thus, one needs simply to choose $\hi_{e}$ such that $\int_{\supp(\hi_e)}\hi_{e}(t') \dif t' = G_{U_{e}}$ in order to implement any desired energy-conserving $U_{e}$.
One possibility is a fixed Hamiltonian which switches on and off, i.e. $\hi_e(t) = i f(t)\!\ln U_e$, where $f$ is a normalized function with support inside the interaction region.
However, note that our approach applies to any form of time dependence, which incorporates a broader range of experimental procedures.

Of course, the clock and engine can become entangled during the procedure, but they will always be in a product state at the end ($t>\tau$).
In fact, the state of the clock doesn't change other than being translated.
This means the operation never degrades the clock.

\section{The Weight}
\label{sec:weight-framework}

Here, we show that one can extend the above considerations to implement general, not necessarily energy-conserving, unitaries on a subsystem of the engine, by compensating on the rest of it.
For that, consider the engine to be composed of two parts, $e = s\otimes w$.

The system, $s$, is the part which we wish to transform.
It is finite-dimensional and has arbitrary Hamiltonian $\hs$ and initial state $\rho_s(0)$.
Our objective is that, after applying $U_e$ on $\rho_s(0)\otimes\rho_w(0)$, the state of the system should be close to $V_s \rho_s(0) V_s^\dagger$, where $V_s$ is a general unitary on $s$.

The weight, $w$, acts as an energy-storage device, which can be raised or lowered to extract or supply energy~\cite{Skrzypczyk2014}.
It has the continuous-spectrum Hamiltonian $\hw = M g X_w$~\footnote{The continuous-spectrum Hamiltonian was chosen for convenience. It has been shown in~\cite{Johan13,Skrzypczyk13} that discrete weights also work arbitrarily well.}, where we choose $M = 1 J m^{-1}/g$.
The primary purpose of the weight is to compensate the system's energy change, since the final state of $s$ can have a different energy than the initial state.
However, as we show below, the initial state $\rho_w(0)$ of the weight may limit how optimally we can transform a state.
In particular, to transform non-diagonal states optimally we need the initial state of the weight to be narrow in momentum and be centered around a known $p_0$ so it can also act as a resource of coherence.
{\AA}berg~\cite{Johan13} pointed out that this resource can be used catalytically, and we show that this holds in our framework.

\subsection{Arbitrary Transformations}
\label{sec:arbitrary-unitaries}

Here, we constructively prove that for any unitary $V_s$ on $s$ there is a choice of $\hi$ which approximately implements it.
First, we choose $U_e$ to have the form
\begin{equation}
    \label{eq:22}
    U_e = \sum_{n,j}
    \braket{E^s_n | V_s | E^s_j} \ket{E^s_n}\!\langle{E^s_j}|\otimes
    \expo{-\frac{i}{\hbar} (P_w - p_0)  (E^s_j - E^s_n) },
\end{equation}
where $E^s_j$ are the energy levels of $\hs$, and $|{E^s_j}\rangle$ are their respective eigenstates.
Note that this unitary satisfies energy-conservation.
Furthermore, it is translation-invariant on $w$, i.e.,
\begin{equation}
  \label{eq:7}
  [U_e, P_w] = 0,
\end{equation}
where $P_w$ is the weight's momentum operator.
This serves two purposes: (\emph{i}) it guarantees the protocol works regardless of the initial height of the weight (or how much energy is stored in it), (\emph{ii}) it enables the catalytic use of the coherences in the weight to transform the system~\cite{Johan13}.
By the proof above, in order to implement this $U_e$, we may choose an $\hi$ which also satisfies $[\hi, P_w] = 0$.
In particular, one possible example is the aforementioned $\hi_e(t) = if(t)\!\ln U_e$.

Alternatively, \cref{eq:22} can be written as
\begin{align}
  \label{eq:47}
  U_e &= \int_\Real U_s(p) \otimes\ket{p}_\subw\!\bra{p} \dif p \\
  U_s(p) &= \sum_{n,j} \expo{-\frac{i}{\hbar} (p - p_0)(E^s_j - E^s_n)}
           \braket{E^s_n | V_s | E^s_j} \ket{E^s_n}\!\bra{E^s_j}, \notag
\end{align}
where $\ket{p}_w$ are the momentum eigenstates of the weight.
In this form, one sees that, upon applying $U_e$ on the engine state, the reduced state of the system becomes
\begin{align}
    \label{eq:48}
    \!\!\Tr_{w} \left[ U_e \rho_e(0) U_e^{\dagger} \right] &= \int_\Real \mu_w(p)\, U_s(p)\rho_s(0) U_s^\dagger(p) \dif p,
\end{align}
where $\mu_w(p) = \bra{p} \rho_w(0) \ket{p}$ is the initial momentum distribution of the weight.

As shown in the Appendix, this can be made arbitrarily close to $V_s \rho_s(0) V_s^{\dagger}$ in trace distance, by making $\rho_w(0)$ narrow enough in momentum space.
Intuitively speaking, the closer $\mu_w(p)$ is to a delta function, the closer this operation is to the exact $V_s$ (which is $U_s(p_0)$).
Therefore, for any $\epsilon>0$, there is always a good enough $\rho_w(0)$ such that
\begin{equation}
    \label{eq:32}
    \left\| \Tr_{w} \left[ U_e \rho_e(0) U_e^{\dagger} \right] - V_s \rho_s(0) V_s^{\dagger} \right\|_1 \leq \epsilon.
\end{equation}
Moreover, since $[U_e,P_w]=0$, $\mu_w(p)$ is conserved by the operations.
Also, note that the full evolution includes both the operations and the free evolution, but the weight's free evolution only shifts $p_0$ without affecting the shape of $\mu_w(p)$.

As the error in implementing $V_s$ depends only on how narrow $\mu_w(p)$ is, this implies that the usefulness of the weight is not degraded.

An interesting special case arises if $V_s$ only permutes energy levels, $V_{s} |{E_j^{s}}\rangle = | E_{\pi(j)}^{s}\rangle$ for some permutation $\pi$, and if the initial state is diagonal in energy, $\rho_{s}(0) = \sum_{n} p_n^{s}| {E_n^{s}}\rangle\!\langle{E_n^{s}}|$.
Joining these two, \cref{eq:48} simplifies to
\begin{align}
  \label{eq:9}
  \Tr_{w}& \left[ U_e \rho_e(0) U_e^{\dagger} \right] \notag\\
         &= \sum_{n,m} |{E^{s}_{\pi(m)}}\rangle\!\bra{E^{s}_m} \rho_{s}(0) \ket{E^{s}_n}\!\langle{E^{s}_{\pi(n)}}| \notag\\
         &\quad\quad\int_\Real \expo{-\frac{i}{\hbar} (p-p_0)(E^{s}_n - E^{s}_{\pi(n)})}\mu_w(p)
           \expo{\frac{i}{\hbar} (p-p_0)(E^{s}_m - E^{s}_{\pi(m)})} \dif p. \notag\\
         &= \sum_{n}p_n^{s} |{E^{s}_{\pi(n)}}\rangle\!\langle{E^{s}_{\pi(n)}}| \notag\\
         &= V_{s} \rho_{s}(0) V_{s}^\dagger,
\end{align}
which means the transformation can be performed exactly regardless of the state of the weight.

\section{Work Cost of Transformations}
\label{sec:framework}

The results above are of very general application in the field of quantum thermodynamics, more specifically in quantum resource theories.
We exemplify this by answering a question posed in~\cite{Skrzypczyk2014,HorodeckiOppenheim13}: ``\emph{Should unitary operations pose a thermodynamic cost during work extraction?}''

\begin{figure}
    \centering
    \includegraphics[width=\linewidth]{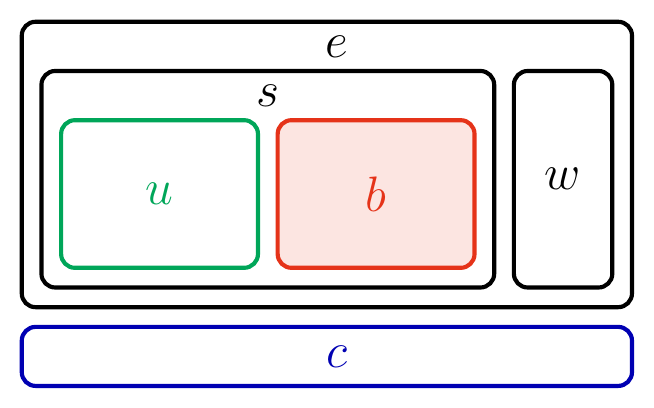}
    \caption{Representation of how the Hilbert space is divided into clock $c$, engine $e$, weight $w$, system $s$, subsystem $u$, and bath $b$.}
    \label{fig:2}
\end{figure}
As such, we further divide the system into two parts, $u \otimes b$, so that the total Hilbert space is $\Hil = u \otimes b \otimes w \otimes c$ (see \cref{fig:2}).
The working subsystem $u$ has arbitrary Hamiltonian $\hu$ and finite dimension $d_u$.

The thermal bath, $b$, is composed of an arbitrary number of finite dimensional systems with arbitrary Hamiltonians in a thermal state at temperature $T$.
Any protocol must specify which bath systems it will use, with their combined Hamiltonian given by $\hb$, and the initial state being $\rho_b(0) = \expo{-\frac{\hb}{k T}}/\trin{\expo{-\frac{\hb}{k T}}}$.

The weight, $w$, follows the same rules as above, but gains a new importance.
Its change in average energy now also represents the thermodynamic work cost or gain.

Using this structure, we can explicitly define the thermodynamic quantities such as internal energy change, extracted work and emitted heat up to time $t$, where $t=0$ is the time when the protocol starts:
\begin{align}
    \label{eq:24}
    \Delta U &= \tr{\hu(\rho(t) - \rho(0))} \notag\\
    W &= \tr{\hw(\rho(t) - \rho(0))} \notag\\
    Q &= \tr{\hb(\rho(t) - \rho(0))}.
\end{align}
Following~\cite{Skrzypczyk2014}, we also define the free energy of a state by $F(\sigma)= \tr{H \sigma} - T S(\sigma)$ where $S(\sigma)$ is the von Neumann entropy of $\sigma$.
With this, we now show that the definitions above obey the first and second laws of thermodynamics, and that optimal work extraction is possible.

\subsection{First and Second Laws}
\label{sec:first-second-laws}

Consider a quantum machine whose Hilbert space and Hamiltonian are as described above, and take as assumptions the following identities:
\begin{subequations}
    \label{eq:39}
    \begin{equation}
        [\hi,\he ] = 0, \label{eq:26}
    \end{equation}
    \vspace{-0.9cm}
    \begin{equation}
        \label{eq:21}
        [\hi,X_c ] = [\hi,P_w ] = 0,
    \end{equation}
\end{subequations}
and $\rho(0) = \rho_u(0)\otimes \rho_b(0)\otimes \rho_w(0)\otimes \rho_c(0)$.
Note that these are all satisfied by the constructions thus far.

The first law of thermodynamics is trivially satisfied by \cref{eq:26}, which implies that $\braket{\he}$ is a conserved quantity\footnote{In fact, the expected value of any function of $\he$ is conserved.}, and so
\begin{align}
  \label{eq:25}
  \Delta\braket{\hu} + \Delta\braket{\hw}
  + \Delta\braket{\hb} &= \Delta\braket{\he} = 0 \notag\\
  \Rightarrow\quad\quad\Delta U + W + Q &= 0.
\end{align}

To prove the second law, we construct a Kelvin-Planck statement~\cite{Thomson51, Planck97}, showing that it is impossible to extract positive work in a cyclic process.
From \cref{eq:21}, we know $\hi$ must satisfy
\begin{equation}
    \label{eq:38}
    \hi = \iint_{\Real^2} \hi_{ub}(x,p) \otimes
    \ket{p}_\subw\!\bra{p}\otimes \ket{x}_\subc\!\bra{x}\dif p \dif x.
\end{equation}
As detailed in the Appendix, this means the reduced state on $u\otimes b$ is a mixture of unitaries $V(x,p)$ applied on the initial state,
\begin{equation}
  \label{eq:27}
  \!\!\!\!\!\Tr_{wc}[\rho(t)]
  \!=\!\!\! \iint_{\Real^2}\!\!\!\!\! \mu(x,p) V(x,p,t) \rho_{ub}(0) V^\dagger(x,p,t) \!\!\dif p \!\dif x, \!\!\!
\end{equation}
where $\mu(x,p) = \braket{x,p|\rho_{cw}(0)|x,p}$.
This means the entropy of $u\otimes b$ never decreases,
\begin{equation}
    \label{eq:28}
    0 \leq \Delta S(\rho_{ub}) 
    \quad \mathrm{or} \quad  -\Delta S(\rho_{b}) \leq \Delta S(\rho_{u}) 
\end{equation}
where $\Delta$ denotes a difference between times $0$ and $t$, and we have used the subadditivity of the entropy and the assumption that the initial state is a product state.

Since the initial bath state has minimum free energy for a given temperature, one finds
\begin{align}
  \label{eq:30}
  &&0 \leq \Delta F(\rho_b) &= \Delta E(\rho_{b}) - T \Delta S(\rho_{b})  \\
  \Longrightarrow\!\!\!\!&& T \Delta S(\rho_{b})&\leq \Delta E(\rho_{b}) = - \Delta E(\rho_{w}) - \Delta E(\rho_{u}) \notag\\
  \Longrightarrow\!\!\!\!&& W = \Delta E(\rho_{w}) &\leq -T \Delta S(\rho_{b}) - \Delta E(\rho_{u})  \leq - \Delta F(\rho_{u}) \notag
\end{align}
This means one cannot extract more energy than the reduction in free energy of the subsystem.
Clearly, if the thermodynamic properties of the subsystem are the same in its initial and final state, $\Delta F(\rho_u) = 0$ and hence positive work cannot be extracted from the bath.
That is, \emph{``One cannot turn heat purely into work''}.

Note that this result is a direct consequence of the assumptions in \cref{eq:39}, and not of any implementation details.
So the first and second laws hold given an arbitrary protocol which follows these assumptions, even one which is far from optimal, or which acts on a different initial state from the one it was designed for.

\subsection{Optimal Transformations}
\label{sec:work-extraction}

We now show that this framework allows thermodynamically optimal state transformation protocols.
In particular, a protocol exists such that the work extracted is arbitrarily close to the reduction in free energy of the subsystem, and the final state of the subsystem is arbitrarily close to the desired final state.
Here, a protocol represents a unitary operation on $u\otimes b\otimes w$ or, equivalently, an $\hi$ on $u\otimes b\otimes w\otimes c$.

This section is based on the protocols in~\cite{Skrzypczyk2014,Skrzypczyk13}, but our framework is slightly different and involves a simpler proof strategy.
Given the above results, we need only find a unitary $V_s$ on $s = u\otimes b$ with the desired effect.
Then, the existence of an interaction Hamiltonian which implements this unitary relies only on the weight state being good enough.

Let us write the initial state of the subsystem as
\begin{align}
  \label{eq:31}
  \rho_u(0) &= \sum_{n} p_n \ket{\psi_n}\!\bra{\psi_n},
\end{align}
and the desired target state of the subsystem by
\begin{align}
    \sigma_u = \sum_m q_m \ket{\phi_m}\!\bra{\phi_m},
\end{align}
where $\{\ket{\psi_n}\}$ and $\{\ket{\phi_n}\}$ are respective eigenbases, labeled so that $p_n>p_{n+1}$ and $q_n>q_{n+1}$.
Thus, we desire that $\rho_u(t) = \expo{-\frac{i}{\hbar} \hu t} \sigma_u \expo{\frac{i}{\hbar} \hu t}$, $\forall t > \tau$, so the interacting initial state becomes the freely-evolving target state.

For simplicity we assume that $\sigma_u$ is full rank.
If the final state has lower rank, we can instead transform the subsystem to a state $\sigma_s'$ with full rank and close to $\sigma_u$ in trace distance.

The desired subsystem-bath unitary, $V_{ub} = V_s$, is composed of three stages.
The first stage acts only on $u$, rotating $\rho_u$ to be diagonal in its energy basis.
That is,
\begin{align}
    \label{eq:45}
  V_{ub}^{(1)} &= \sum_{n} \ket{E^u_n} \! \bra{\psi_n} \otimes \id_b,
\end{align}
where $\ket{E^u_n}$ are the eigenstates of $\hu$.
After this step the state of the subsystem will be
\begin{align}
    \sum_{n} p_n \ket{E^u_n}\!\bra{E^u_n}.
\end{align}
Note that entropy is conserved in this stage, so the total energy change is equal to the change in free energy of $\rho_u$.

The second stage, $V_{ub}^{(2)}$, takes this diagonal state into another diagonal state, and has been studied before~\cite{Skrzypczyk13,Skrzypczyk2014,Johan13,HorodeckiOppenheim13}.
We present here a proof of its energy efficiency which is simpler than previous ones.
The unitary is divided into many small steps, acting on $u\otimes b$ and gradually changing the probabilities of the subsystem's energy levels from $p_n$ to $q_n$, by performing the swap operation on the subsystem state and a similar state from the bath.
The first step uses a thermal state in the bath of dimension $d_u$, whose probabilities $p_n'$ lie between $p_n$ and $q_n$ and are very close to the former, i.e., $\left| p_n - p_n' \right|<\delta p$ for all $n$.

Since this thermal state has minimum free energy, $\Delta F_b$ must be of order $\delta p^2$, so $\Delta E_b = T \Delta S_b + \oo{\delta p^2}$.
Furthermore, due to the swap operation, $\Delta S_b = - \Delta S_u$, and so the energy variation of this step is
\begin{align}
    \label{eq:44}
  \Delta E^1_{ub} &= \Delta E^1_b + \Delta E^1_u = \Delta F^1_u + \oo{\delta p^2}.
\end{align}

Since $|p_n - q_n| < 1$ for all $n$, we can choose a unitary $V_{ub}^{(2)}$ such that only $\delta p^{-1}$ steps are necessary, and so the total energy variation must be
\begin{align}
    \label{eq:44}
  \Delta E_{ub} &= \sum_{j = 1}^{\delta p^{-1}} \Delta F_u^j + \oo{\delta p^2} = \Delta F_u + \oo{\delta p}.
\end{align}
After this stage the state of the subsystem will be exactly
\begin{align}
    \sum_{n} q_n \ket{E^u_n}\!\bra{E^u_n}.
\end{align}
Note that this step only consists of permutations on the $\hu+\hb$ energy basis, which means that if both $\rho_u(0)$ and $\sigma_u$ are diagonal then the transformation can be performed perfectly regardless of the state of the weight.

In the final stage, we again act only on the subsystem, rotating it into the eigenstates of $\sigma_u$ via the unitary
\begin{align}
    \label{eq:45}
    V_{ub}^{(3)} &= \sum_{n} \ket{\phi_n} \! \bra{E^u_n} \otimes \id_b,
\end{align}
which takes the final state of the subsystem into $\sigma_u$,
\begin{align}
    \label{eq:17}
    \Tr_b \left[ V_{ub} \rho_{ub}(0) V_{ub}^\dagger \right] = \sigma_u,
\end{align}
with $V_{ub} =  V^{(3)}_{ub} V^{(2)}_{ub} V^{(1)}_{ub}$.
Note that the protocol also involves free evolution, see \cref{eq:23}, so the subsystem's state becomes the freely-evolving $\sigma_u$.
That is, $\rho_u(t) = \expo{-\frac{i}{\hbar} \hu t} \sigma_u \expo{\frac{i}{\hbar} \hu t}$ for all $t > \tau$.
If one desires to have exactly $\sigma_u$ at a specific time $t'$, one simply needs to add a fourth step to the unitary, $V_{ub} =  \expo{i \hu (t-t')}V^{(3)}_{ub} V^{(2)}_{ub} V^{(1)}_{ub}$.

Looking at the energy balance, we find that the energy change of the subsystem and bath is as close as desired to the free energy change of the subsystem, making it a thermodynamically optimal transformation.
That is,
\begin{align}
    \label{eq:46}
    \Delta E_{ub} = \Delta F_{u} + \epsilon,
\end{align}
where $\epsilon > 0$ can be made as small as desired.

By the results above, if the state of the weight is narrow enough in momentum, there exists an $\hi$ which implements $V_{ub}$ after a time $\tau$, so that $\left\| \Tr_{bwc}\rho(t) - \expo{-\frac{i}{\hbar} \hu t} \sigma_u \expo{\frac{i}{\hbar} \hu t} \right\|_1$ is small for all $t >\tau$.
This means \cref{eq:46} still holds up to a small error and, by energy conservation, this energy difference must have been transferred to the weight.

Thus for any upper bound $\epsilon'>0$ we wish to impose on the error, there is always a good enough weight state (small trace distance) and a gradual enough protocol (small $\delta p$) such that, by our definition of work,
\begin{align}
  \label{eq:33}
  W &= \tr{H_w(\rho(\tau) - \rho(0))} \notag\\
    &\geq - \Delta F_u - \epsilon'.
\end{align}

\section{Discussion}
\label{sec:discussion}

Here, we have shown it is possible to exactly perform any energy-conserving unitary on a closed quantum system by attaching it to a quantum clock via a static interaction Hamiltonian.
We have extended this so that any unitary can be approximated by attaching this system to a weight.
Furthermore, this framework was proven to always satisfy the laws of thermodynamics, and to be a viable implementation of optimal quantum thermal machines.
It was also shown that neither the clock nor the weight are degraded by the procedure, so they can be repeatedly used to transform a succession of systems.

This addresses the question of whether quantum thermal machines, composed of nothing but time-independent Hamiltonian evolution, can be as efficient as externally controlled ones.
Remarkably, even if the clock has a broad initial state, this poses no thermodynamic cost in principle.
There is also no intrinsic cost in using the weight, albeit there is a stronger restriction on the initial state.
Namely, one needs it to be narrow in momentum space in order to achieve arbitrary precision.

For simplicity, we have considered the domain in the clock's position space to be $\Real$, but the same results can be achieved with a periodic clock as long as the timescales involved are smaller than the period.
It is an open question whether one can derive similar results with a more physical Hamiltonian, such as one whose energies are bounded from below or one which is finite dimensional.
One way of doing so could be to look for physical Hamiltonians $H_c$ that are similar to $v P_c$ whenever the state of the clock lies in a certain region of the state space.
In this way, it seems like it should be possible to approximate our results sufficiently well, by choosing the interaction Hamiltonian and initial state of the clock such that the system stays inside this region with high probability.

So far, we used the clock as a quantum mechanical way of controlling the engine, not as a quantum time-measuring device.
If one desires to use it as such, the unitary is still implemented exactly, but there are two relevant scenarios to consider with regards to knowing the state of the subsystem.
For instance, consider the initial state of the clock to have support inside $[-K, 0]$.
(1) If the desired final state of the subsystem $\sigma_u$ is diagonal in its energy basis, then we are guaranteed to have this state if measuring the clock's position yields a value greater than $\tau$.
(2) If the final state is not diagonal, then measuring a value of $x > \tau$ for the clock's position indicates we have a state between $\expo{- \frac{i}{\hbar} \hu (x+K)} \sigma_u \expo{ \frac{i}{\hbar} \hu (x+K)}$ and $\expo{-\frac{i}{\hbar} \hu {x}} \sigma_u \expo{\frac{i}{\hbar} \hu {x}}$, so our precision is dependent on how narrow the clock state is.
In addition, having to perform a measurement could impose additional thermodynamic costs.

Throughout this work, we have considered unitaries and interaction Hamiltonians which commute with the free Hamiltonian of the engine. 
Previous work~\cite{Skrzypczyk2014} also allowed unitaries which preserve the average energy of the specified initial state, without commuting with the Hamiltonian.
This allowed for optimal protocols which were independent of the state of the weight.
However, in order to implement such unitaries with the clock it seems that one would need to place strong conditions on its initial state. 
This is one of the reasons why we chose interactions which commute with the engine Hamiltonian.

\section{Acknowledgements}
\label{sec:aknowledgements}

The authors gratefully acknowledge fruitful discussions with Lea Kr\"{a}mer Gabriel, Ralph Silva, Renato Renner, Sandra Rankovic, and Sandu Popescu.
AJS acknowledges support from the Royal Society.
ASLM acknowledges support from the Conselho Nacional de Desenvolvimento Cient\'{i}fico e Tecnol\'{o}gico.
PK acknowledges support from the Swiss National Science Foundation (through the National Centre of Competence in Research `Quantum Science and Technology') and the European Research Council (grant 258932).
This work was partially supported by the COST Action MP1209.

\bibliography{references}

\begin{thebibliography}{21}%
\makeatletter
\providecommand \@ifxundefined [1]{%
 \@ifx{#1\undefined}
}%
\providecommand \@ifnum [1]{%
 \ifnum #1\expandafter \@firstoftwo
 \else \expandafter \@secondoftwo
 \fi
}%
\providecommand \@ifx [1]{%
 \ifx #1\expandafter \@firstoftwo
 \else \expandafter \@secondoftwo
 \fi
}%
\providecommand \natexlab [1]{#1}%
\providecommand \enquote  [1]{``#1''}%
\providecommand \bibnamefont  [1]{#1}%
\providecommand \bibfnamefont [1]{#1}%
\providecommand \citenamefont [1]{#1}%
\providecommand \href@noop [0]{\@secondoftwo}%
\providecommand \href [0]{\begingroup \@sanitize@url \@href}%
\providecommand \@href[1]{\@@startlink{#1}\@@href}%
\providecommand \@@href[1]{\endgroup#1\@@endlink}%
\providecommand \@sanitize@url [0]{\catcode `\\12\catcode `\$12\catcode
  `\&12\catcode `\#12\catcode `\^12\catcode `\_12\catcode `\%12\relax}%
\providecommand \@@startlink[1]{}%
\providecommand \@@endlink[0]{}%
\providecommand \url  [0]{\begingroup\@sanitize@url \@url }%
\providecommand \@url [1]{\endgroup\@href {#1}{\urlprefix }}%
\providecommand \urlprefix  [0]{URL }%
\providecommand \Eprint [0]{\href }%
\providecommand \doibase [0]{http://dx.doi.org/}%
\providecommand \selectlanguage [0]{\@gobble}%
\providecommand \bibinfo  [0]{\@secondoftwo}%
\providecommand \bibfield  [0]{\@secondoftwo}%
\providecommand \translation [1]{[#1]}%
\providecommand \BibitemOpen [0]{}%
\providecommand \bibitemStop [0]{}%
\providecommand \bibitemNoStop [0]{.\EOS\space}%
\providecommand \EOS [0]{\spacefactor3000\relax}%
\providecommand \BibitemShut  [1]{\csname bibitem#1\endcsname}%
\let\auto@bib@innerbib\@empty
\bibitem [{\citenamefont {Horodecki}\ and\ \citenamefont
  {Oppenheim}(2013)}]{HorodeckiOppenheim13}%
  \BibitemOpen
  \bibfield  {author} {\bibinfo {author} {\bibfnamefont {M.}~\bibnamefont
  {Horodecki}}\ and\ \bibinfo {author} {\bibfnamefont {J.}~\bibnamefont
  {Oppenheim}},\ }\href {\doibase 10.1038/ncomms3059} {\bibfield  {journal}
  {\bibinfo  {journal} {Nat Comms}\ }\textbf {\bibinfo {volume} {4}} (\bibinfo
  {year} {2013}),\ 10.1038/ncomms3059}\BibitemShut {NoStop}%
\bibitem [{\citenamefont {Skrzypczyk}\ \emph {et~al.}(2014)\citenamefont
  {Skrzypczyk}, \citenamefont {Short},\ and\ \citenamefont
  {Popescu}}]{Skrzypczyk2014}%
  \BibitemOpen
  \bibfield  {author} {\bibinfo {author} {\bibfnamefont {P.}~\bibnamefont
  {Skrzypczyk}}, \bibinfo {author} {\bibfnamefont {A.~J.}\ \bibnamefont
  {Short}}, \ and\ \bibinfo {author} {\bibfnamefont {S.}~\bibnamefont
  {Popescu}},\ }\href {\doibase 10.1038/ncomms5185} {\bibfield  {journal}
  {\bibinfo  {journal} {Nat Comms}\ }\textbf {\bibinfo {volume} {5}} (\bibinfo
  {year} {2014}),\ 10.1038/ncomms5185}\BibitemShut {NoStop}%
\bibitem [{\citenamefont {\AA{}berg}(2014)}]{Johan13}%
  \BibitemOpen
  \bibfield  {author} {\bibinfo {author} {\bibfnamefont {J.}~\bibnamefont
  {\AA{}berg}},\ }\href {\doibase 10.1103/PhysRevLett.113.150402} {\bibfield
  {journal} {\bibinfo  {journal} {Phys. Rev. Lett.}\ }\textbf {\bibinfo
  {volume} {113}},\ \bibinfo {pages} {150402} (\bibinfo {year}
  {2014})}\BibitemShut {NoStop}%
\bibitem [{\citenamefont {Skrzypczyk}\ \emph {et~al.}(2013)\citenamefont
  {Skrzypczyk}, \citenamefont {Short},\ and\ \citenamefont
  {Popescu}}]{Skrzypczyk13}%
  \BibitemOpen
  \bibfield  {author} {\bibinfo {author} {\bibfnamefont {P.}~\bibnamefont
  {Skrzypczyk}}, \bibinfo {author} {\bibfnamefont {A.~J.}\ \bibnamefont
  {Short}}, \ and\ \bibinfo {author} {\bibfnamefont {S.}~\bibnamefont
  {Popescu}},\ }\href {http://arxiv.org/abs/1302.2811} {\bibfield  {journal}
  {\bibinfo  {journal} {arXiv}\ } (\bibinfo {year} {2013})},\ \Eprint
  {http://arxiv.org/abs/arXiv:1302.2811} {arXiv:1302.2811} \BibitemShut
  {NoStop}%
\bibitem [{\citenamefont {Anders}\ and\ \citenamefont
  {Giovannetti}(2013)}]{AndersGiovannetti13}%
  \BibitemOpen
  \bibfield  {author} {\bibinfo {author} {\bibfnamefont {J.}~\bibnamefont
  {Anders}}\ and\ \bibinfo {author} {\bibfnamefont {V.}~\bibnamefont
  {Giovannetti}},\ }\href {http://stacks.iop.org/1367-2630/15/i=3/a=033022}
  {\bibfield  {journal} {\bibinfo  {journal} {New Journal of Physics}\ }\textbf
  {\bibinfo {volume} {15}},\ \bibinfo {pages} {033022} (\bibinfo {year}
  {2013})}\BibitemShut {NoStop}%
\bibitem [{\citenamefont {Alicki}\ \emph {et~al.}(2004)\citenamefont {Alicki},
  \citenamefont {Horodecki}, \citenamefont {Horodecki},\ and\ \citenamefont
  {Horodecki}}]{Alicki04}%
  \BibitemOpen
  \bibfield  {author} {\bibinfo {author} {\bibfnamefont {R.}~\bibnamefont
  {Alicki}}, \bibinfo {author} {\bibfnamefont {M.}~\bibnamefont {Horodecki}},
  \bibinfo {author} {\bibfnamefont {P.}~\bibnamefont {Horodecki}}, \ and\
  \bibinfo {author} {\bibfnamefont {R.}~\bibnamefont {Horodecki}},\ }\href
  {\doibase 10.1023/B:OPSY.0000047566.72717.71} {\bibfield  {journal} {\bibinfo
   {journal} {Open Systems \& Information Dynamics}\ }\textbf {\bibinfo
  {volume} {11}},\ \bibinfo {pages} {205} (\bibinfo {year} {2004})}\BibitemShut
  {NoStop}%
\bibitem [{\citenamefont {{\AA}berg}(2013)}]{Aberg13}%
  \BibitemOpen
  \bibfield  {author} {\bibinfo {author} {\bibfnamefont {J.}~\bibnamefont
  {{\AA}berg}},\ }\href {\doibase 10.1038/ncomms2712} {\bibfield  {journal}
  {\bibinfo  {journal} {Nat Comms}\ }\textbf {\bibinfo {volume} {4}} (\bibinfo
  {year} {2013}),\ 10.1038/ncomms2712}\BibitemShut {NoStop}%
\bibitem [{\citenamefont {Frenzel}\ \emph {et~al.}(2014)\citenamefont
  {Frenzel}, \citenamefont {Jennings},\ and\ \citenamefont
  {Rudolph}}]{Frenzel14}%
  \BibitemOpen
  \bibfield  {author} {\bibinfo {author} {\bibfnamefont {M.~F.}\ \bibnamefont
  {Frenzel}}, \bibinfo {author} {\bibfnamefont {D.}~\bibnamefont {Jennings}}, \
  and\ \bibinfo {author} {\bibfnamefont {T.}~\bibnamefont {Rudolph}},\ }\href
  {\doibase 10.1103/PhysRevE.90.052136} {\bibfield  {journal} {\bibinfo
  {journal} {Phys. Rev. E}\ }\textbf {\bibinfo {volume} {90}},\ \bibinfo
  {pages} {052136} (\bibinfo {year} {2014})}\BibitemShut {NoStop}%
\bibitem [{\citenamefont {Levy}\ and\ \citenamefont
  {Kosloff}(2012)}]{Amikam12}%
  \BibitemOpen
  \bibfield  {author} {\bibinfo {author} {\bibfnamefont {A.}~\bibnamefont
  {Levy}}\ and\ \bibinfo {author} {\bibfnamefont {R.}~\bibnamefont {Kosloff}},\
  }\href {\doibase 10.1103/PhysRevLett.108.070604} {\bibfield  {journal}
  {\bibinfo  {journal} {Phys. Rev. Lett.}\ }\textbf {\bibinfo {volume} {108}},\
  \bibinfo {pages} {070604} (\bibinfo {year} {2012})}\BibitemShut {NoStop}%
\bibitem [{\citenamefont {Linden}\ \emph {et~al.}(2010)\citenamefont {Linden},
  \citenamefont {Popescu},\ and\ \citenamefont {Skrzypczyk}}]{Noah10}%
  \BibitemOpen
  \bibfield  {author} {\bibinfo {author} {\bibfnamefont {N.}~\bibnamefont
  {Linden}}, \bibinfo {author} {\bibfnamefont {S.}~\bibnamefont {Popescu}}, \
  and\ \bibinfo {author} {\bibfnamefont {P.}~\bibnamefont {Skrzypczyk}},\
  }\href {\doibase 10.1103/PhysRevLett.105.130401} {\bibfield  {journal}
  {\bibinfo  {journal} {Phys. Rev. Lett.}\ }\textbf {\bibinfo {volume} {105}},\
  \bibinfo {pages} {130401} (\bibinfo {year} {2010})}\BibitemShut {NoStop}%
\bibitem [{\citenamefont {Geva}\ and\ \citenamefont
  {Kosloff}(1996)}]{EitanRonnie96}%
  \BibitemOpen
  \bibfield  {author} {\bibinfo {author} {\bibfnamefont {E.}~\bibnamefont
  {Geva}}\ and\ \bibinfo {author} {\bibfnamefont {R.}~\bibnamefont {Kosloff}},\
  }\href {\doibase http://dx.doi.org/10.1063/1.471453} {\bibfield  {journal}
  {\bibinfo  {journal} {The Journal of Chemical Physics}\ }\textbf {\bibinfo
  {volume} {104}},\ \bibinfo {pages} {7681} (\bibinfo {year}
  {1996})}\BibitemShut {NoStop}%
\bibitem [{\citenamefont {Brand\~ao}\ \emph {et~al.}(2013)\citenamefont
  {Brand\~ao}, \citenamefont {Horodecki}, \citenamefont {Oppenheim},
  \citenamefont {Renes},\ and\ \citenamefont {Spekkens}}]{Brandao13b}%
  \BibitemOpen
  \bibfield  {author} {\bibinfo {author} {\bibfnamefont {F.~G. S.~L.}\
  \bibnamefont {Brand\~ao}}, \bibinfo {author} {\bibfnamefont {M.}~\bibnamefont
  {Horodecki}}, \bibinfo {author} {\bibfnamefont {J.}~\bibnamefont
  {Oppenheim}}, \bibinfo {author} {\bibfnamefont {J.~M.}\ \bibnamefont
  {Renes}}, \ and\ \bibinfo {author} {\bibfnamefont {R.~W.}\ \bibnamefont
  {Spekkens}},\ }\href {\doibase 10.1103/PhysRevLett.111.250404} {\bibfield
  {journal} {\bibinfo  {journal} {Phys. Rev. Lett.}\ }\textbf {\bibinfo
  {volume} {111}},\ \bibinfo {pages} {250404} (\bibinfo {year}
  {2013})}\BibitemShut {NoStop}%
\bibitem [{\citenamefont {Brand\~{a}o}\ \emph {et~al.}(2015)\citenamefont
  {Brand\~{a}o}, \citenamefont {Horodecki}, \citenamefont {Ng}, \citenamefont
  {Oppenheim},\ and\ \citenamefont {Wehner}}]{Brandao13}%
  \BibitemOpen
  \bibfield  {author} {\bibinfo {author} {\bibfnamefont {F.}~\bibnamefont
  {Brand\~{a}o}}, \bibinfo {author} {\bibfnamefont {M.}~\bibnamefont
  {Horodecki}}, \bibinfo {author} {\bibfnamefont {N.}~\bibnamefont {Ng}},
  \bibinfo {author} {\bibfnamefont {J.}~\bibnamefont {Oppenheim}}, \ and\
  \bibinfo {author} {\bibfnamefont {S.}~\bibnamefont {Wehner}},\ }\href
  {\doibase 10.1073/pnas.1411728112} {\bibfield  {journal} {\bibinfo  {journal}
  {Proceedings of the National Academy of Sciences}\ }\textbf {\bibinfo
  {volume} {112}},\ \bibinfo {pages} {3275} (\bibinfo {year} {2015})},\ \Eprint
  {http://arxiv.org/abs/http://www.pnas.org/content/112/11/3275.full.pdf}
  {http://www.pnas.org/content/112/11/3275.full.pdf} \BibitemShut {NoStop}%
\bibitem [{\citenamefont {Hasegawa}\ \emph {et~al.}(2010)\citenamefont
  {Hasegawa}, \citenamefont {Ishikawa}, \citenamefont {Takara},\ and\
  \citenamefont {Driebe}}]{Hasegawa2010}%
  \BibitemOpen
  \bibfield  {author} {\bibinfo {author} {\bibfnamefont {H.-H.}\ \bibnamefont
  {Hasegawa}}, \bibinfo {author} {\bibfnamefont {J.}~\bibnamefont {Ishikawa}},
  \bibinfo {author} {\bibfnamefont {K.}~\bibnamefont {Takara}}, \ and\ \bibinfo
  {author} {\bibfnamefont {D.}~\bibnamefont {Driebe}},\ }\href {\doibase
  http://dx.doi.org/10.1016/j.physleta.2009.12.042} {\bibfield  {journal}
  {\bibinfo  {journal} {Physics Letters A}\ }\textbf {\bibinfo {volume}
  {374}},\ \bibinfo {pages} {1001 } (\bibinfo {year} {2010})}\BibitemShut
  {NoStop}%
\bibitem [{\citenamefont {Allahverdyan}\ \emph {et~al.}(2008)\citenamefont
  {Allahverdyan}, \citenamefont {Johal},\ and\ \citenamefont
  {Mahler}}]{Armen08}%
  \BibitemOpen
  \bibfield  {author} {\bibinfo {author} {\bibfnamefont {A.~E.}\ \bibnamefont
  {Allahverdyan}}, \bibinfo {author} {\bibfnamefont {R.~S.}\ \bibnamefont
  {Johal}}, \ and\ \bibinfo {author} {\bibfnamefont {G.}~\bibnamefont
  {Mahler}},\ }\href {\doibase 10.1103/PhysRevE.77.041118} {\bibfield
  {journal} {\bibinfo  {journal} {Phys. Rev. E}\ }\textbf {\bibinfo {volume}
  {77}},\ \bibinfo {pages} {041118} (\bibinfo {year} {2008})}\BibitemShut
  {NoStop}%
\bibitem [{\citenamefont {Cummings}(1965)}]{Cummings65}%
  \BibitemOpen
  \bibfield  {author} {\bibinfo {author} {\bibfnamefont {F.~W.}\ \bibnamefont
  {Cummings}},\ }\href {\doibase 10.1103/PhysRev.140.A1051} {\bibfield
  {journal} {\bibinfo  {journal} {Phys. Rev.}\ }\textbf {\bibinfo {volume}
  {140}},\ \bibinfo {pages} {A1051} (\bibinfo {year} {1965})}\BibitemShut
  {NoStop}%
\bibitem [{\citenamefont {Papadopoulos}(1980)}]{Papadopoulos80}%
  \BibitemOpen
  \bibfield  {author} {\bibinfo {author} {\bibfnamefont {G.~J.}\ \bibnamefont
  {Papadopoulos}},\ }\href {http://stacks.iop.org/0305-4470/13/i=4/a=031}
  {\bibfield  {journal} {\bibinfo  {journal} {Journal of Physics A:
  Mathematical and General}\ }\textbf {\bibinfo {volume} {13}},\ \bibinfo
  {pages} {1423} (\bibinfo {year} {1980})}\BibitemShut {NoStop}%
\bibitem [{Note1()}]{Note1}%
  \BibitemOpen
  \bibinfo {note} {The continuous-spectrum Hamiltonian was chosen for
  convenience. It has been shown in~\cite {Johan13,Skrzypczyk13} that discrete
  weights also work arbitrarily well.}\BibitemShut {Stop}%
\bibitem [{Note2()}]{Note2}%
  \BibitemOpen
  \bibinfo {note} {In fact, the expected value of any function of $H_e$ is
  conserved.}\BibitemShut {Stop}%
\bibitem [{\citenamefont {Thomson}(1851)}]{Thomson51}%
  \BibitemOpen
  \bibfield  {author} {\bibinfo {author} {\bibfnamefont {W.}~\bibnamefont
  {Thomson}},\ }\href@noop {} {\bibfield  {journal} {\bibinfo  {journal}
  {Transactions of the Royal Society of Edinburgh}\ }\textbf {\bibinfo {volume}
  {XX}},\ \bibinfo {pages} {261} (\bibinfo {year} {1851})}\BibitemShut
  {NoStop}%
\bibitem [{\citenamefont {Planck}(1897)}]{Planck97}%
  \BibitemOpen
  \bibfield  {author} {\bibinfo {author} {\bibfnamefont {M.}~\bibnamefont
  {Planck}},\ }\href@noop {} {\emph {\bibinfo {title} {Vorlesungen \"{u}ber
  Thermodynamik}}}\ (\bibinfo  {publisher} {Veit \& Comp.},\ \bibinfo {year}
  {1897})\BibitemShut {NoStop}%
\end{thebibliography}%
\appendix

\section{\Large Appendix}
\subsection{Close in Trace Distance}
\label{sec:close-trace-distance}

To say that a statement is true for $\mu_w(p)$ narrow enough and centered around $p_0$ is equivalent to saying it is true for $\frac{1}{\delta} \mu_w(\frac{p-p_0}{\delta} + p_0)$ with a small enough $\delta > 0$.
\begin{theorem}
    \label{thr:1}
    For any finite dimensional Hilbert space $\Hil$, let $\mathcal{D}(\Hil)$ be its set of density matrices.
    Given any $\epsilon >0$, any probability distribution $\mu_w:\Real\to [0,\infty)$ with a well defined first moment, and any continuous function $\nu:\Real\to\mathcal{D}(\Hil)$, there is always a $\delta>0$ such that
    \begin{equation}
        \label{eq:49}
        \left\| \frac{1}{\delta } \int_\Real \mu_w \left(\frac{p-p_0}{\delta} + p_0 \right)
          \nu(p) \dif p - \nu(p_0) \right\|_1
        \leq \epsilon,
    \end{equation}
    where $p_0 = \int p\, \mu_w(p) \dif p$.
\end{theorem}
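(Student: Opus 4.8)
The plan is to reduce the statement to the dominated convergence theorem. First I would record that the rescaled density $\mu_\delta(p) := \frac{1}{\delta}\mu_w\!\left(\frac{p-p_0}{\delta}+p_0\right)$ is again a probability distribution and, via the substitution $q = (p-p_0)/\delta + p_0$, has the same first moment $p_0$; in particular $\int_\Real \mu_\delta(p)\dif p = 1$. This lets me rewrite the operator inside the norm in \cref{eq:49} as $\int_\Real \mu_\delta(p)\,\bigl(\nu(p)-\nu(p_0)\bigr)\dif p$, since the $\nu(p_0)$ term can be pulled under the integral against the normalized weight.

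Next I would use convexity — the triangle inequality for $\|\cdot\|_1$ applied to the integral — to bound the left-hand side of \cref{eq:49} by $\int_\Real \mu_\delta(p)\,\|\nu(p)-\nu(p_0)\|_1\dif p$, and then undo the substitution to write this as $\int_\Real \mu_w(q)\, g_\delta(q)\dif q$ with $g_\delta(q) := \|\nu(\delta(q-p_0)+p_0)-\nu(p_0)\|_1$. For each fixed $q$ the argument $\delta(q-p_0)+p_0$ tends to $p_0$ as $\delta\to 0$, so continuity of $\nu$ at $p_0$ gives $g_\delta(q)\to 0$ pointwise; moreover $g_\delta(q)\le 2$ for all $q$ and $\delta$, and the constant function $2$ is $\mu_w$-integrable precisely because $\mu_w$ is a probability distribution. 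Dominated convergence then yields $\int_\Real \mu_w(q)\,g_\delta(q)\dif q\to 0$, so the bound falls below $\epsilon$ for all sufficiently small $\delta>0$.

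An equivalent elementary route, avoiding any measure-theoretic machinery, is to split the last integral at a radius $R$: choose $R$ so that $\int_{|q-p_0|>R}\mu_w(q)\dif q < \epsilon/4$ (possible since $\mu_w$ is normalized), choose $\eta>0$ from continuity of $\nu$ at $p_0$ so that $\|\nu(p)-\nu(p_0)\|_1<\epsilon/2$ whenever $|p-p_0|<\eta$, and take $\delta<\eta/R$. On $|q-p_0|\le R$ the argument of $\nu$ lies within $\eta$ of $p_0$, contributing at most $\epsilon/2$; on the complement the crude bound $g_\delta\le 2$ contributes at most $2\cdot\epsilon/4$, and the two pieces sum to $\epsilon$.

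The argument is short, and the only point requiring care is the interchange of limit and integral — equivalently, controlling the tails of $\mu_w$ uniformly in $\delta$; this is where normalization of $\mu_w$ is used, while the hypothesis of a well-defined first moment serves only to make sense of the centering point $p_0$ and of the symmetric rescaling. I would also remark in passing that full continuity of $\nu$ on $\Real$ is more than is needed: continuity at the single point $p_0$ already suffices.
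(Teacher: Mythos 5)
Your proof is correct. Your ``elementary route'' --- split the $q$-integral at a radius $R$ chosen so the tail mass of $\mu_w$ is below $\epsilon/4$, use continuity of $\nu$ at $p_0$ on the bulk, and the crude bound $\left\|\nu(p)-\nu(p_0)\right\|_1\leq 2$ on the tail --- is essentially the paper's own argument, which performs the same split in the rescaled variable over an interval $\Delta=(p_0-\delta',p_0+\delta')$ and its complement, with the same $\epsilon/2 + 2\cdot\epsilon/4$ bookkeeping. Your primary route via dominated convergence is a mild repackaging of the same two ingredients: continuity of $\nu$ at $p_0$ gives pointwise convergence of $g_\delta$, and normalization of $\mu_w$ supplies the dominating function $2\mu_w$. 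What that framing buys is exactly your two side remarks, both correct and not made explicit in the paper: only continuity of $\nu$ at the single point $p_0$ is actually needed (global continuity is used only for measurability of the integrand), and the first-moment hypothesis serves solely to define the centering point $p_0$, any fixed reference point working equally well. One small point of care in the dominated-convergence version: the limit $\delta\to 0$ is over a continuous parameter, so you should either pass to arbitrary sequences $\delta_n\to 0$ or simply fall back on your elementary estimate, which directly exhibits the required $\delta$; this is routine and does not affect correctness.
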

\begin{proof}
    By virtue of the continuity of $\nu$, there is always a $\delta'$ such that
    \begin{equation}
        \label{eq:34}
        \max_{ p \in \Delta} \left\|\nu(p) - \nu(p_0) \right\|_1
        \leq \frac{\epsilon}{2},
    \end{equation}
    where $\Delta = (p_0-{\delta'},p_0+{{\delta'} })$.
    Furthermore, since $\mu_w$ is a normalized probability distribution on $\Real$, for any $\delta'>0$ there is always a $\delta$ such that
    \begin{equation}
        \label{eq:35}
        \int_{p_0- \frac{\delta'}{\delta}}^{p_0+\frac{\delta'}{\delta}} \mu_w \left(p \right) \dif p
        > 1- \frac{\epsilon}{4}.
    \end{equation}
    Given that
    \begin{align}
      \label{eq:2}
      \int_{p_0- \frac{\delta'}{\delta}}^{p_0+\frac{\delta'}{\delta}} \mu_w \left(p \right) \dif p &=\int_{- \delta'}^{\delta'} \mu_w \left( \frac{p}{\delta} + p_0 \right) \frac{\dif p}{\delta} \notag\\
      &= \int_{\Delta} \mu_w \left( \frac{p - p_0}{\delta} + p_0 \right) \frac{\dif p}{\delta}
    \end{align}
    this implies
    \begin{equation}
        \label{eq:35}
        \int_{\Delta} \mu_w \left( \frac{p - p_0}{\delta} + p_0 \right) \frac{\dif p}{\delta}
        > 1- \frac{\epsilon}{4} .
    \end{equation}

    To simplify the following equations, let us define $f(p) = \mu_w ( \frac{p - p_0}{\delta} + p_0)$.
    Therefore, for any $\epsilon$ we have
    \begin{align}
        \label{eq:36}
        &\left\| \frac{1}{\delta } \int_\Real f(p)
          \nu(p) \dif p - \nu(p_0) \right\|_1 \notag\\
        &\quad= \left\| \int_\Real
          f(p) \left[ \nu(p) - \nu(p_0) \right] \frac{\dif p}{\delta}
        \right\|_1 \notag\\
        &\quad\leq \int_\Real f(p)
        \left\| \nu(p) - \nu(p_0) \right\|_1 \frac{\dif p}{\delta} \notag\\
        &\quad\leq \int_\Delta f(p)
        \left\| \nu(p) - \nu(p_0) \right\|_1 \frac{\dif p}{\delta}
        + 2 \int_{\Real/\Delta}\!\! f(p) \frac{\dif p}{\delta} \notag\\
        &\quad\leq \int_\Delta f(p) \frac{\dif p}{\delta}
        \max_{p'\in \Delta} \left\|\nu(p') - \nu(p_0) \right\|_1
        + \frac{\epsilon}{2} \notag\\
        &\quad\leq \epsilon,
    \end{align}
    where the third inequality is due to $\left\| \nu(p) - \nu(p_0) \right\|_1 \leq \left\| \nu(p) \right\|_1 +\left\| \nu(p_0) \right\|_1 = 2$.
\end{proof}

Finally, combining \cref{eq:48} with \cref{thr:1} while setting $\nu(p) = \Tr_{w} \left[ {U_e(p)\rho_{e}(0)U_e^\dagger(p)} \right]$ leads to the statement in \cref{eq:32}.

\subsection{Mixture of Unitaries}
\label{sec:mixture-unitaries}

Given an initial state of the form $\rho(0) = \rho_{ub}(0)\otimes \rho_w(0)\otimes \rho_c(0)$, we calculate the reduced subsystem-bath state evolving under the Hamiltonian
\begin{align}
  \label{eq:38}
  H &= \hu + \hb + \hw + \hc + \hi\notag\\ 
    &= H_0 + \hi\notag\\ 
  \hi &= \iint_{\Real^2} \hi_{ub}(x,p) \otimes
        \ket{p}_\subw\!\bra{p}\otimes \ket{x}_\subc\!\bra{x}\dif p \dif x. \\ \end{align}
In the interaction picture $\hi$ takes the form
\begin{align}
  &\thi(t) = \expo{iH_0t}\hi\expo{-iH_0t} \\
  &\quad= \expo{iH_0t}\iint_{\Real^2}\!\!\! \hi_{ub}(x,p) \otimes
    \ket{p}_\subw\!\bra{p}\otimes \ket{x}_\subc\!\bra{x}\dif p \dif x \expo{-iH_0t}\notag \\
  &\quad= \iint_{\Real^2}\!\!\! \thi_{ub}(x,p,t) \otimes
    \ket{p-t}_\subw\!\bra{p-t}\otimes \ket{x-t}_\subc\!\bra{x-t}\dif p \dif x \notag\\ 
  &\quad= \iint_{\Real^2}\!\!\! \thi_{ub}(x+t,p+t,t) \otimes
    \ket{p}_\subw\!\bra{p}\otimes \ket{x}_\subc\!\bra{x}\dif p \dif x,  \notag
\end{align}
where
\begin{equation}
    \label{eq:5}
      \thi_{ub}(x,p,t)= \expo{i(\hu+\hb)t}\hi_{ub}(x,p)\expo{-i(\hu+\hb)t}.
\end{equation}
With this, the time evolution operator between times $0$ and $t$ can be written as
\begin{align}
  \mathcal{U}(t) &= \expo{-iH_0t}\timeorder{\expo{-i \int_{0}^{t} \thi(t') \dif t'}} \\
                 &= \expo{-i \hw t}\expo{-i \hc t}\expo{-i \hu t}\expo{-i \hb t} \notag\\
                 &\quad\quad\times\iint_{\Real^2}\!\!\!
                   \timeorder{\expo{-i \int_{0}^{t} \thi_{ub}(x+t',p+t',t') \dif t'}} \notag\\
                 &\quad\quad\quad\quad\times \ket{p}_\subw\!\bra{p}
                   \otimes \ket{x}_\subc\!\bra{x} \dif p \dif x,
\end{align}
When calculating the reduced time-evolved subsystem-bath state, the $\hw$ and $\hc$ exponentials vanish by ciclicity of the trace, and one is left with a mixture of unitaries.
That is,
\begin{align}
  \Tr_{wc}[\rho(t)]
  &= \Tr_{wc}[\mathcal{U}(t) \rho(0) \mathcal{U}^\dagger(t)] \\
  &= \iint_{\Real^2} \Tr_{wc}\left[(\rho_w(0)\otimes \rho_c(0)) (\ket{p}_\subw\!\bra{p}\otimes \ket{x}_\subc\!\bra{x}) \right] \notag\\
  &\quad\quad\quad\quad \times V(x,p) \rho_{ub}(0) V^\dagger(x,p) \dif p \dif x, \notag\\
  &= \iint_{\Real^2}  \mu(x, p) V(x,p) \rho_{ub}(0) V^\dagger(x,p) \dif p \dif x, \notag
\end{align}
where $\mu(x,p) = \braket{x,p|\rho_{cw}(0)|x,p}$ and
\begin{align}
  \label{eq:27}
  V(x,p) &=\expo{-i \hu t}\expo{-i \hb t} \timeorder{\expo{-i \int_{0}^{t} \thi_{ub}(x+t',p+t',t') \dif t'}}.
\end{align}
\end{document}